\newtheorem{definition}{Definition}
\newtheorem{lemma}{Lemma}
\newtheorem{theorem}{Theorem}
\newtheorem{proposition}{Proposition}
\newtheorem*{remark}{Remark}
\theoremstyle{definition}
\newtheorem{example}{Example}
\newcolumntype{L}[1]{>{\raggedright\let\newline\\arraybackslash\hspace{0pt}}m{#1}}
\newcolumntype{C}[1]{>{\centering\let\newline\\arraybackslash\hspace{0pt}}m{#1}}
\newcolumntype{R}[1]{>{\raggedleft\let\newline\\arraybackslash\hspace{0pt}}m{#1}}
\newcolumntype{P}[1]{>{\centering\arraybackslash}p{#1}} 
\newcommand{\N}{\mathbb{N}}
\newcommand{\R}{\mathbb{R}}
\renewcommand{\P}{\mathbb{P}}
\newcommand{\E}{\mathbb{E}}
\definecolor{Blue}{rgb}{0.00, 0.00, 1.00}
\definecolor{Red}{rgb}{1.00, 0.00, 0.00}
\begin{document}

	\title{A Folk Theorem for Indefinitely Repeated Network Games}
	
	  \author{Andrea Benso\\ 
		Dipartimento di Matematica e Informatica ``Ulisse Dini''\\ 
		Universit\`{a} degli Studi di Firenze \\
		Viale Giovanni Battista Morgagni 67/A, 50134 Florence,  Italy 
	}
	\date{}
	\maketitle
	
	\begin{abstract}
		\noindent We consider a repeated game in which players, considered as nodes of a network, are connected. Each player observes her neighbors' moves only. Thus, monitoring is private and imperfect. Players can communicate with their neighbors at each stage; each player, for any subset of her neighbors, sends the same message to any player of that subset. Thus, communication is local and both public and private. Both communication and monitoring structures are given by the network. The solution concept is perfect Bayesian equilibrium. In this paper we show that a folk theorem holds if and only if the network is 2-connected for any number of players.
	\end{abstract}

	\strut
	
	\noindent\textbf{Keywords:} Folk theorem, Imperfect private monitoring, Networks, Repeated games.
	
	\medskip 
	
	\noindent\textbf{JEL Codes:} C72, C73 
	
	\vspace{7.25cm}
	
	\hrulefill 

	\thanks{*The author acknowledges his membership to the National Group
		for Algebraic and Geometric Structures, and their Applications
		(GNSAGA–INdAM).}
	
	%
	
	\setcounter{page}{0} \thispagestyle{empty} \pagebreak \newpage

\newpage 

\section*{Introduction}
Many papers on folk theorems with imperfect monitoring consider a setting in which players can send messages to all their opponents ({\it global} communication) but are constrained to send the same message to all of them ({\it public} communication). However, in many real situations, communication is {\it local}: each player can communicate only with a subset of players, called {\em neighbors}. In this paper we study repeated games with local monitoring and local communication: players observe only the moves of their neighbors and can communicate only with them. Thus, the monitoring structure is given by a network, where the nodes are the players and edges link neighbors.

The goal is to determine for which networks a folk theorem holds, meaning under which conditions all feasible, strictly individually rational payoffs can be sustained as equilibrium payoffs in the repeated game with high discounting. In particular, we study perfect Bayesian equilibria (PBE) and show that---under the usual non-emptiness assumption on the interior of all feasible, strictly individually rational payoffs---a folk theorem holds if and only if the network is 2-connected.  One of the main challenges is to construct a communication system so reliable as to guarantee the accurate transmission of information---past action deviations---even in presence of false messages; to address this problem we assume that players can send at each stage (costless) messages to any subset of their neighbors and are constrained to send the same message to any player of that subset; in this way, we first show that for any player it is impossible to learn false informations and then that, after an action deviation, all players will learn the deviator's name in finite time. Due to the latter properties of the communication system, we use the approach of \cite{Fudenberg_Maskin_86, Fudenberg_Maskin_91} to construct an equilibrium strategy, divided in four phases: a cooperation phase, a communication phase, a punishment phase and a reward phase; players after an action deviation (which interrupts the cooperation phase), inform other players of the deviator's name (communication phase) and then punish her (punishment phase); finally, players are rewarded for following the strategy (reward phase). The communication protocol we are proposing is similar to the one proposed by Laclau, Renou and Venel in \cite{Laclau_Renou_Venel_2024}: they consider sender-receiver games, where the sender and receiver are two distant nodes in a communication network; therefore, they construct a protocol in which the intermediaries (the players between the sender and the receiver) cannot manipulate the message. The main difference between their protocol and the one proposed in this paper is that in \cite{Laclau_Renou_Venel_2024} the players know when the sender starts to send the message and can therefore organize themselves to build, from the beginning of the game, a scheme (a time partition) in which they analyze the messages received; here, we see the deviator in action as the sender of the message and all other players as the receivers; however, the players who are not neighbors' deviator in action do not know about that deviation and have to find a way to decode the messages that arrive to them. In addition, in our setting players could make false announcements of action deviation of one of their neighbors so that, after punishing her, they are rewarded in the reward phase. Therefore, we need to build a reliable protocol even under this kind of deviation.

This problem connects game theory and computer science. In game theory, players are rational agents who make decisions based on individual preferences; in literature many works have studied how the conflict between equilibrium and social optimality can be solved in repeated games (see \cite{Aumann_Shapley_1994, Fudenberg_Maskin_86, Mailath_Samuelson_06, Rubinstein_77, Sorin_86}). Meanwhile, in computer science, a parallel field studies fault-tolerant communication in distributed systems, where processors communicate over a network; in particular, this field focuses on network topologies that allow reliable communication protocols (see \cite{Dolev_Dwork_Waarts_Yung_93, Franklin_Wright_00, Pease_Shostak_Lamport_80}). In this paper, we study a problem relevant to both fields: designing protocols for players to report their observations and identify non-cooperative behavior. We characterize the network structures where such protocols exist, construct explicit protocols, and apply them to repeated games.  

\medskip

\noindent {\bf Related literature.} 
Analyzing long-term relationships between rational agents is the task of the theory of repeated games with patient players. The existence of efficient equilibrium outcomes has been recognized for a long time, starting with Aumann and Shapley in \cite{Aumann_Shapley_1994}; a key result has been proven by Fudenberg and Maskin in \cite{Fudenberg_Maskin_86, Fudenberg_Maskin_91}, who established the folk theorem for subgame perfect equilibria in discounted repeated games with high discount factors under the assumption of perfect monitoring. Later, many papers on folk theorems with imperfect monitoring focused on imperfect public monitoring (\cite{Fudenberg_Levine_Maskin_94}). In the discounted case, Lehrer (\cite{Lehrer_89, Lehrer_92}) provided a fairly comprehensive study of the equilibrium set for two-player games with imperfect private monitoring. With more than two players, the difficulty relies on the necessity for the players to coordinate their behavior to punish potential deviators. In \cite{Fudenberg_Levine_91}, Fudenberg and Levine established a folk theorem with imperfect private monitoring without explicit communication. They consider private random signals induced by the action profile of all other players. Under the assumption of public and global communication, sufficient conditions for a folk theorem to hold are given in \cite{Compte_98, Kandori_Matsushima_98, Obara_09}; Ben-Porath and Kahneman in \cite{Ben-Porath_Kahneman_96} established a folk theorem for the case in which players observe their neighbors' moves; however, they maintain the assumption of public and global communication. In \cite{Laclau_13}, Laclau extended the result to private communication, maintaining the global communication assumption.  In \cite{Laclau_14}, Laclau studied repeated games with local interaction, where players only interact with their neighbors on a network, and with global communication. With public and local communication, Renault and Tomala in \cite{Renault_Tomala_98} and Tomala in \cite{Tomala_11}, studied repeated games where players observe their neighbors' moves. In \cite{Renault_Tomala_04bis, Renault_Tomala_08} Renault and Tomala studied local communication and in \cite{Laclau_12} Laclau studied repeated games with local interaction and local private communication. Yet, these papers do not impose sequential rationality. 
This work aims to solve this problem, indeed we assume all players may deviate unilaterally. A first result has been proven by Laclau in \cite{Laclau_12_workingpaper} for the case of four-player games, so we generalize the latter to any number of players.

\medskip 

The paper is organized as follows. In Section \ref{sec:the model} is introduced the model. In Section \ref{sec:main_result} is presented the main result. In Section \ref{sec: The equilibrium strategy} is described the strategy. In Section \ref{sec: properties} are provided properties and examples. In Section \ref{sec:proof_main_result} is demonstrated the main result. Finally, are discussed concluding remarks and open questions.

\section{The model} \label{sec:the model}

We consider a repeated game in which players at each period observe only the actions taken by their neighbors in a given graph. Formally, the model is described by the following data:
\begin{itemize}\renewcommand{\labelitemi}{-} 
	\item a finite set of players  $N = \{1, \dots, n\}$, with $ n \geq 2 $.
	\item For each player $ i \in N$, a non-empty finite set $A_i$ of actions, with $|A_i| \geq 2$. Denote $A = \prod_{i \in N} A_i$.
	\item An undirected graph $G = (N, E)$, where the vertices are the players and $E \subseteq \binom{N}{2}$\footnote{For a set $N$ and a positive integer $k=1,\ldots,|N|$, we denote by $\binom{N}{k}$ all the subsets of $N$ with cardinality $k$, i.e., $\{Y \subseteq N\, : \,  |Y|=k\}$. } is the set of edges. 
	\item A payoff function for each player $i \in N$, given by $u_i: A \to \mathbb{R} $.
	\item A (infinite) set of messages $\mathcal{M}_i $ for each player $i$. The description of $\mathcal{M}_i$ is given in the following sections.
\end{itemize}

We use the following notations: for each player $i$, $\mathcal{N}_i = \{ j \neq i \mid \{i, j\} \in E \} $ is the set of neighbors of $i$, $N_{-i}= N \setminus \{i\}$, $E_{-i}=E \setminus \{\{i,j\}\, |\, j \in \mathcal{N}_i \}$,  $A_{\mathcal{N}_i} = \prod_{j \in \mathcal{N}_i} A_j $ and $u = (u_1, \dots, u_n) $ denotes the payoff vector. The repeated game proceeds as follows. At each stage $t \in\N$:  

\begin{itemize}
	\item[-] simultaneously, players choose actions in their action sets and send costless messages to any subset of their neighbors. More precisely, for any player $i$ and subset $X_i \in 2^{\mathcal{N}_i} \setminus \{\varnothing\}$, player $i$ broadcasts the same message to each player in $X_i$; denoting by $\mathcal{M}_{i,X_i}$ the set of messages that $i$ sends to the subset $X_i$, the set $\mathcal{M}_i$ is $\prod_{X_i \in 2^{\mathcal{N}_i} \setminus \{\varnothing\}} \mathcal{M}_{i, X_i}$.  
	\item[-] Let $a^t = (a_i^t)_{i \in N} $ be the action profile at stage $t$. At the end of the stage, each player $i\in N$ observes her neighbors' actions $(a_j^t)_{j \in \mathcal{N}_i} $.
\end{itemize}

Thus, monitoring and communication are determined by the network $G$. Moreover, we assume perfect recall, and that the whole description of the game is common knowledge. For each stage $t$, let $H_i^t $ be the set of private histories of player $i$ up to stage $t$:  
$$
H^t_i = (A_i \times (\mathcal{M}_i)^{2^{\mathcal{N}_i}\setminus \{\varnothing\}} \times (\mathcal{M}_j)^{ 2^{\mathcal{N}_j \setminus \{i\}}\cup \{i\}}_{j \in \mathcal{N}_i} \times \{u_i\} \times A_{\mathcal{N}_i})^t, 
$$
where $\{u_i\}$ is the range of $u_i$ ($H_i^0$ is a singleton). An element of $H^t_i $ is called a $i$-history of length $t$. An \textit{action strategy} for player $i$ is denoted by $\sigma_i = (\sigma_i^t)_{t \geq 1} $, where for each stage $t$, $\sigma_i^t$ is a map from $H_i^{t-1}$ to $\Delta(A_i)$ ($\Delta(A_i) $ denotes the set of probability distributions over $A_i$). A \textit{communication strategy} for player $i$ is denoted by $\varphi_i = (\varphi_i^t)_{t \geq 1}$, where for each stage $t$, $\varphi_i^t$ is a map from $H_i^{t-1}$ to $\Delta((\mathcal{M}_i)^{2^{\mathcal{N}_i}\setminus \{\varnothing\}})$. A \textit{behavior strategy} of player $i$ is a pair $(\sigma_i, \varphi_i) $. Let $\Sigma_i$ be the set of action strategies of player $i$ and $\Phi_i$ be her set of communication strategies. Denote by $\sigma = (\sigma_i)_{i \in N} \in \Sigma:=\prod_{i \in N} \Sigma_i $ the joint action strategy profile of the players and by $\varphi = (\varphi_i)_{i \in N} \in \Phi:= \prod_{i \in N} \Phi_i $ their joint communication strategy. So, given an equilibrium strategy $(\hat{\sigma}, \hat{\varphi})\in \Sigma \times \Phi$, for any player $i$ there can be three types of deviation: in action (from $\hat{\sigma}_i$), in communication (from $\hat{\varphi}_i$) and both together. To ease, if player $i$ deviates from $(\hat{\sigma}_i, \hat{\varphi}_i)$ by choosing $(\tilde{\sigma}_i, \tilde{\varphi}_i)$ ($\neq (\hat{\sigma}_i, \hat{\varphi}_i)$), if $\tilde{\sigma}_i\neq \hat{\sigma}_i$ we will simply say that $i$ is a {\it deviator} when we refer to her as a deviator in action and if $\tilde{\varphi}_i\neq \hat{\varphi}_i$ we will say that $i$ is a {\em liar} when we refer to her as a deviator in communication. Let $H^t$ be the set of histories of length $t$ consisting of sequences of actions, messages and payoffs for $t$ stages and $ H^\infty $ be the set of all possible infinite histories. A strategy profile $(\sigma, \varphi) $ induces a probability distribution $\P_{\sigma, \varphi} $ over the set of plays $H^\infty $, we denote $\E_{\sigma, \varphi}$ the corresponding expectation. 
\medskip  

We consider the discounted infinitely repeated game, where the overall payoff function of each player $i$ is the expected sum of discounted payoffs: 
$$
\gamma_{\delta,i}(\sigma, \varphi) = \E_{\sigma, \phi} \left[ (1 - \delta) \sum_{t=1}^{\infty} \delta^{t-1} u_i(a^t) \right], 
$$
where $\delta \in (0,1) $ is the common discount factor.  
\medskip 

The solution concept is Perfect Bayesian Equilibrium (PBE) of the discounted repeated game. While there is no agreed upon definition of what restrictions apply after histories off the equilibrium path, this plays no role in our construction, and any definition will do. However, we refer to the PBE definition given in \cite[Chapter 4]{Gibbons_92}. More precisely, we construct a restriction of the PBE strategy to a particular class of private histories; namely, the histories along which only unilateral deviations occur. In addition, the construction has the property that, after such histories, the specified play is optimal no matter what beliefs a player holds about their opponents play, provided that the beliefs are such that: for every player $i \in N$, if player $i$ observes a private history compatible with a history along which no deviation has taken place (respectively along which only unilateral deviations have taken
place), then player $i$ believes that no deviation has taken place (respectively only unilateral deviations have taken place). Plainly, this suffices to ensure optimality. Given that play following other kinds of histories (i.e., those involving multilateral deviations) is irrelevant to the outcome, the strategies and beliefs defined in our construction can be 
completed arbitrarily for those off-path histories. Details are in Section \ref{sec:proof_main_result}.
\medskip 	
	
Let $\Gamma_\delta(G, u)$ be the $\delta$-discounted game, and denote by $E^{\text{PBE}}_\delta(G, u)$ the set of PBE payoffs respectively. For each $a\in A$, let $ u(a) = (u_1(a), \ldots, u_n(a)) $ and $ u(A) = \{ u(a) \mid a \in A \} $. The set of feasible payoffs is the convex hull of $u(A)$, denoted by $\text{co}\, u(A)$. The (independent) minmax level of player $i$ is defined by:  
$$
\underline{v}_i = \min_{(x_j)_{j \in N_{-i}} \in \prod_{j \in N_{-i}} \Delta(A_j)}\, \max_{x_i \in \Delta(A_i)} u_i(x_i, (x_j)_{j \in N_{-i}}). 
$$
Henceforth, we normalize the payoffs of the game such that $(\underline{v}_1, \dots, \underline{v}_n) = (0, \dots, 0) $. The set of strictly individually rational payoffs is $IR^*(G, u) = \{ u = (u_1, \dots, u_n) \in \R^n \mid \forall\, i \in N, u_i > 0 \}$. Finally, let $ V^* = \text{co}\, u(A) \cap IR^*(G, u) $ be the set of feasible and strictly individually rational payoffs.
\medskip   

The goal of this paper is to characterize the networks $G$ for which a folk theorem holds, meaning that every feasible and strictly individually rational payoff is a PBE payoff of the repeated game for a sufficiently high discount factor. 

\section{The main result} \label{sec:main_result}
In this setting, players, after observing a deviation from the equilibrium path, start to communicate by sending messages to any subset of their neighbors. 
Our goal is to design a communication protocol that guarantees all players at the end of the protocol to have learnt correctly the name of the deviator even in presence of false messages, so that they can coordinate to punish her, i.e. to force her to her minmax level in the repeated game. We will demonstrate that the information of deviation propagates correctly even in the presence of deviations within the communication mechanism (lies) if and only if the graph is 2-connected, providing thus a necessary and sufficient condition. To do so, denote with $G_{-i}$ the graph obtained by removing player $i \in N$ from the graph $G$. Formally, the resulting graph is defined as: $G_{-i}=(N_{-i},  E_{-i})$.
\begin{definition}\label{2-connected graph}
	A graph $G$ is said to be $2$-connected if for any $i \in N$ the graph $G_{-i}$ is connected.
\end{definition}
In other words, $G$ is 2-connected if removing a vertex results a graph that is still connected, i.e. it is necessary remove at least two vertices to get a disconnected graph. Hassler Whitney (\cite{Whitney1, Whitney2}) provided the following useful characterization for $2$-connected graphs.

\begin{theorem} \label{thm: whitney}
	A graph $G=(N,E)$ with at least three vertices is $2$-connected if and only if for every two vertices $i,j \in
	N$ there exists a cycle on $G$ containing both.
\end{theorem}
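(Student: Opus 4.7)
The plan is to prove the two directions separately. The $(\Leftarrow)$ direction is essentially immediate; the $(\Rightarrow)$ direction requires an inductive construction of the cycle and is where the real work lies. For $(\Leftarrow)$, suppose every pair of vertices in $G$ lies on a common cycle, and fix any $k \in N$. To show that $G_{-k}$ is connected, pick $i,j \in N_{-k}$ and a cycle $C$ containing both. If $k \notin C$, then $C$ itself contains an $i$-$j$ path avoiding $k$. If $k \in C$, then $C$ decomposes into two internally disjoint $i$-$j$ arcs and at least one of them avoids $k$; so $i$ and $j$ are connected in $G_{-k}$, as desired.

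For $(\Rightarrow)$, I would argue by induction on $d(u,v)$, the graph distance in $G$ between the two vertices $u,v$ for which a common cycle is sought. The base case $d(u,v)=1$ relies on the auxiliary fact that removing a single edge from a 2-connected graph on at least three vertices leaves it connected: if deleting the edge $\{u,v\}$ split $G$ into two components, then since $|N|\geq 3$ at least one component contains a vertex other than its endpoint of $\{u,v\}$, making that endpoint a cut vertex of $G$, contradicting 2-connectedness. Hence a path from $u$ to $v$ still exists after deleting the edge $\{u,v\}$, and closing it with that edge yields a cycle through $u$ and $v$.

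For the inductive step with $d(u,v)=d\geq 2$, take a shortest $u$-$v$ path and let $w$ be the vertex immediately preceding $v$, so that $d(u,w)=d-1$ and $\{w,v\}\in E$. By the induction hypothesis there is a cycle $C$ containing $u$ and $w$. If $v\in C$ we are done. Otherwise, I use 2-connectedness once more: $G_{-w}$ is connected, so there is a path $P$ from $v$ to $u$ avoiding $w$. Let $x$ be the first vertex of $P$, starting from $v$, that lies on $C$, and let $P'$ be the $v$-to-$x$ portion of $P$. The cycle $C$ is the union of two internally disjoint $u$-$w$ arcs; splicing together the edge $\{v,w\}$, the $w$-to-$u$ arc of $C$ not containing $x$, the $u$-to-$x$ arc of $C$ on the other side, and the reverse of $P'$ produces a closed walk through $u$ and $v$, from which a simple cycle through both can be extracted.

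The main obstacle is the splicing step in the inductive case: I have to handle degenerate positions of $x$ on $C$ (in particular $x=u$, while $x\neq w$ holds automatically since $P$ avoids $w$), ensure that the constructed closed walk is actually a simple cycle with no repeated vertices, and verify that both $u$ and $v$ remain on the final cycle. These case distinctions are standard but slightly fiddly; once they are settled the theorem follows.
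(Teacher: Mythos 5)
Your proposal is correct, but there is nothing in the paper to compare it against: the paper does not prove this statement at all. It is Whitney's classical characterization, imported with a citation to Whitney's original papers and then used as a black box in the proofs of Lemmata \ref{imposs_fake_news} and \ref{at_least_one_learns}. What you have written is the standard textbook argument---the easy $(\Leftarrow)$ direction via the two internally disjoint arcs of a cycle, and the $(\Rightarrow)$ direction by induction on the distance $d(u,v)$, with the base case resting on the fact that a $2$-connected graph on at least three vertices stays connected after deleting one edge---and it is sound, including your base-case derivation of that edge-deletion fact from the cut-vertex contradiction.

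Two small points to pin down, both of which you partially anticipate. First, your induction presupposes that $d(u,v)$ is finite, i.e.\ that $G$ is connected; the paper's Definition \ref{2-connected graph} only requires each $G_{-i}$ to be connected, so this needs a one-line check: if $G$ were disconnected and some component $K$ had $|K|\geq 2$, removing any $j \in K$ would leave $K\setminus\{j\}\neq\varnothing$ together with at least one other component, so $G_{-j}$ would be disconnected; and if all components were singletons, $G$ would be edgeless and any $G_{-j}$ would consist of at least two isolated vertices. Second, in the splicing step you should verify directly that the closed walk you build is already a simple cycle---which it is, since $P'$ is internally disjoint from $C$ by the choice of $x$, avoids $w$, and $v \notin C$---rather than appeal to ``extracting'' a simple cycle from a closed walk: extraction from an arbitrary closed walk does not in general preserve membership of both $u$ and $v$, so phrased as a fallback it would be a genuine gap, whereas phrased as a direct verification (with the two cases $x=u$ and $x$ interior to one arc, as you note) it closes cleanly.
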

We can now state the main result of this paper.

\begin{theorem} \label{main_result}
	Consider the game $\Gamma_\delta(G, u)$ described previously. If the network $G$ is 2-connected and the interior of $V^*$ is nonempty, then for any feasible and strictly individually rational payoffs $v \in V^*$, there exists $\bar{\delta} \in (0,1)$ such that for all $\delta \in (\bar{\delta},1)$, $v$ is a \text{PBE} vector payoff of the $\delta$-discounted game.
\end{theorem}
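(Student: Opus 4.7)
The plan is to construct an explicit strategy profile $(\sigma^*, \varphi^*)$ following the four-phase architecture of Fudenberg and Maskin: a cooperation phase implementing the target payoff $v$, a communication phase triggered by any observed action deviation, a punishment phase that minmaxes the identified deviator, and a reward phase that compensates the punishers. Since $v$ lies in the interior of $V^*$, I can first find a finite sequence of pure action profiles whose average payoff approximates $v$, and choose reward payoffs $w^i \in V^*$ for each $i \in N$ such that $w^i_i < v_i$ while $w^i_j > v_j$ for $j \neq i$; the strict inequality $w^i_j > v_j$ is the standard differential that makes players strictly prefer carrying out the punishment of $i$ to shirking on the punishment themselves. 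A separate short phase in which players communicate the identity of the deviator connects the detection of a deviation to the start of the punishment.

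The heart of the argument is the communication protocol, which must be reliable against both action and communication deviations. By Theorem \ref{thm: whitney}, $2$-connectedness gives, for any two distinct players $i,j$, a cycle through both, equivalently two internally vertex-disjoint paths from $i$ to $j$. When a player $k$ observes an action deviation by a neighbor, she broadcasts an accusation naming the deviator; the protocol propagates this accusation along the graph. Because any two players are joined by two vertex-disjoint paths, any single liar $\ell$ can corrupt at most one of them, so every recipient eventually sees the same claim via at least one lie-free route. Crucially, the stage-game rule forcing a player to send the same message to any chosen subset of her neighbors prevents $\ell$ from telling one common neighbor that $i$ deviated while telling another that $i$ did not; cross-checking among common neighbors thus exposes any such inconsistency. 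I will formalize this as a lemma establishing that, under at most one unilateral deviation in communication, after a bounded number of stages (at most on the order of the diameter of $G$) every player either agrees on the correct action deviator, or identifies the liar, who is then treated as the deviator to be punished.

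With the protocol in hand, I would verify the one-shot deviation principle phase by phase. In the cooperation phase any action deviation is detected by some neighbor in one stage and triggers an arbitrarily long minmax, making the short-term gain negligible for $\delta$ close to one. In the communication phase, by the protocol lemma, any lie is detected and the liar is punished, so lying is unprofitable; spurious accusations of a non-deviator by a player seeking to trigger the favorable reward $w^i_j$ are deterred by the same mechanism, because such an accusation is inconsistent with the absence of a genuine deviation signal on the second disjoint path and is therefore exposed. In the punishment and reward phases the standard Fudenberg--Maskin differential $w^i_j > v_j$ ensures any deviation from the prescribed behavior is detected by some neighbor of the new deviator, restarting the protocol and lowering her continuation value. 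Sequential rationality off-path follows by construction, since on every history compatible with at most one unilateral deviation the beliefs pinned down by the protocol's correctness property make the prescribed play optimal.

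The main obstacle — and where the bulk of the technical work lies — is proving the robustness of the communication protocol in a form strong enough for PBE. Two issues compound. First, unlike in the sender--receiver setting of Laclau, Renou and Venel, receivers do not know a priori when an announcement starts, so the protocol must be self-synchronizing: players must agree on the starting stage of the communication phase and distinguish genuine from spurious announcements. Second, one must rule out profitable false accusations: a player could claim that a neighbor deviated in order to reach a reward phase in which non-deviators receive $w^i_j > v_j$. Combining $2$-connectedness, the constraint that a player broadcast identical messages to any selected subset of her neighbors, and the reward differentials to show that neither silence, nor selective lying, nor spurious accusation is profitable — simultaneously across all possible unilateral deviations and all pairs of players — is the delicate step that I expect to require the most care.
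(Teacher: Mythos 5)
Your overall architecture matches the paper's---the four Fudenberg--Maskin phases, Whitney's cycle characterization of $2$-connectedness, and a reward differential for minmaxing players---but the theorem's real content is the reliability of the communication protocol, and there your argument has a genuine gap. You claim that a single liar corrupts at most one of the two vertex-disjoint paths, so every recipient ``eventually sees the same claim via at least one lie-free route.'' This fails on two counts. First, the relevant hypothesis (the one the paper uses in Lemmas \ref{imposs_fake_news} and \ref{at_least_one_learns}, and the one needed for the PBE verification) is \emph{at most one deviation per stage}, which allows different players to lie at different stages, so no fixed route is guaranteed lie-free over the course of the protocol. Second, and more fundamentally, the danger is not only suppression of a true accusation but \emph{injection} of a false one: a lying neighbor can deliver a fabricated claim along one edge, and the recipient has no way of knowing which of her two incoming routes is the corrupted one. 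The paper resolves this with a concrete mechanism your sketch does not supply: random authentication keys in $[0,1]$ attached to every message, blocks of $2n'-3$ stages along a longest cycle, and a decoding rule under which a message is learned only if it is repeated $n'-1$ times and no matching key arrives from the opposite direction of the cycle. Correctness then rests on two quantitative lemmas---no player can ever learn a false deviation because guessing a key has probability zero (Lemma \ref{imposs_fake_news}), and in every block at least one new player learns the true one, via the counting argument $|\mathcal{L}^p|+|\mathcal{L}^{p^-}|+|\mathcal{L}^q|+|\mathcal{L}^{q^-}|\le 2n'-3$ against the derived lower bound $2n'-2$ (Lemma \ref{at_least_one_learns}). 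You correctly identify self-synchronization and spurious accusations as the delicate points, but you leave them as ``expected to require the most care'' rather than supplying the mechanism, so the heart of the proof is missing.

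Your proposed treatment of liars is also a step that would fail, not merely a different route. You want the exposed liar to be ``treated as the deviator to be punished,'' but coordinating a punishment requires \emph{all} players to commonly learn the liar's identity, which is itself a reliable-broadcast problem of exactly the same difficulty---a regress. The paper deliberately avoids this: liars are never punished. Instead, Lemma \ref{imposs_fake_news} (including its second part, covering a false announcement that a neighbor deviated in order to reach the reward phase) guarantees that no lie can ever be learned by anyone, so lies leave continuation play---and hence the liar's payoff---unchanged; the equilibrium verification then only needs to deter action deviations, which trigger minmaxing after at most $L=1+(n'-3)(2n'-3)$ stages, with the incentive computations carried out phase by phase as in Fudenberg--Maskin. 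Making lies payoff-neutral rather than punishable is what closes the argument; the punishment-of-the-liar scheme you propose cannot be implemented under local private monitoring and would leave the PBE verification incomplete.
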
 

Renault and Tomala (\cite{Renault_Tomala_98}) showed that the 2-connectedness of $G$ is a necessary condition. Indeed, if $G$ is not 2-connected, a folk theorem could fail for some payoff function (even one such that the interior of $V^*$ is nonempty). To prove the sufficiency (i.e., Theorem \ref{main_result}) we construct a strategy profile $(\sigma^*, \varphi^*)$ and show that this is a PBE strategy with expected payoff $v \in V^*$ for a discount factor close enough to one. The formal proof is given in Section \ref{sec:proof_main_result}.

\section{The equilibrium strategy} \label{sec: The equilibrium strategy}
The strategy is divided into four phases. First, there is a stream of pure action profiles that yields the desired payoff; this is the equilibrium path. The second phase starts only in case of deviation and is used to inform all players about the deviator's identity and the period in which it occurred; this is described by the communication protocol. Finally, there are the punishment phase and reward phase.

\subsection*{Phase I: equilibrium path}
At each period $t \in \N$ player $i \in N$ chooses $\bar{a}^t_i \in A_i$ such that
$$
(1- \delta) \sum_{t=1}^{\infty} \delta^{t-1} u_i (\bar{a}^t) = v_i; 
$$
this is possible under conditions on discount factor: Sorin (\cite{Sorin_86}) proved that a sufficient condition is $\delta \ge \frac{1}{n}$ and Fudenberg and Maskin (\cite{Fudenberg_Maskin_91}) proved that for every $\varepsilon >0$, there exists $\delta_\varepsilon \in (0,1)$ such that for any $\delta \ge \delta_\varepsilon$ and every $v \in V^*$ such that $v_i \geq \underline{v}_i$ for all $i$, the deterministic sequence of pure actions $\bar{a}^t$ can be constructed so that the continuation payoffs at each stage are within $\varepsilon$ of $v$.
\medskip 

Moreover, at each period $t \in \N$, player $i$ sends the empty message $ (\varnothing, \varnothing, x_{i,t})$ to any subset of her neighbors, where $x_{i,t}$ is uniformly drawn on $[0, 1]$. The meaning of this triplet will be explained in Phase II. 
\subsection*{Phase II: communication phase}

This phase aims at identifying the deviator when a deviation occurs. Formally, the strategy of Phase II can be viewed as a communication protocol: a specification of how players choose their messages, the number of communication rounds, and the output rule. Each player $i \in N$ starts the communication protocol every time detects any kind of deviation from $(\sigma^*, \phi^*)$. For instance, suppose that player $k$ deviates at period $t_0 \in \N$, then player $i$ enters Phase II at ($i$) the end of period $t_0$ if $i \in \{k\} \cup \mathcal{N}_k$, and at $(ii)$ period $t$ if she has received a signal of deviation through the communication protocol at period $t-1$. 
\medskip 

It is worth noting that players may enter Phase II at different stages. Indeed, they start Phase II only when they receive a signal of deviation, which happens at different times for each player. Moreover, players may start a new communication protocol although a previous one has not ended yet. Therefore, there can be several communication protocols running at the same time.
\medskip 

During Phase II, players should stick to the action strategy according to the phase in which the play is. For instance, if players are following the equilibrium path at stage $t$ when they enter Phase II, they should keep playing $\bar{a}$ when performing the protocol, until a possible previous protocol ends up and yields to a new phase of the game.
\medskip 

\begin{center} \text{COMMUNICATION PROTOCOL} \end{center}

Consider the graph $G$ and let $n'$ be the length of the largest cycle on $G$. Suppose that player $k\in N$ deviates at period $t_0 \in \N$. Starting from period $t_0+1$ we consider $n'-3$ blocks, denoted by $B_b$, $b=1,\ldots,n'-3$, of $2n'-3$ periods; crucially, when a player learns about a deviation of $k$ at period $t_0\in \N$, she automatically knows this partition.
\medskip 
\begin{itemize}
\item \textbf{The message space:} the message that player $i\in N$ broadcasts to any subset of her neighbors consists of: $(i)$ a message concerning the deviation that $i$ knows, or the empty message if she doesn't know about a deviation; $(ii)$ a number chosen with uniform probability in $[0,1]$ called \textit{authentication key}; $(iii)$ a sequence of triplets, each of which is composed with the name of a player, a period and a number in $[0,1]$. Formally, the set of messages $\mathcal{M}_{i,X_i}$ that $i$ can broadcast to the subset $X_i \in 2^{\mathcal{N}_i} \setminus \{\varnothing \}$ is 
\begin{align*} 
	\mathcal{M}_{i,X_i} & = \big( ((N \times \N) \cup \{(\varnothing, \varnothing)\}) \times [0,1] \big) \\ 
	&\bigtimes\big( \bigtimes_{j \in \mathcal{N}_i }\big \{\{\varnothing\} \cup (\{j\}\times \N \times [0,1])\big \}\big) \\ 
	&\bigtimes\big( \bigtimes_{j \in N \setminus \mathcal{N}_i }\big \{\{\varnothing\} \cup (\{j\}\times \N \times [0,1])\big \}^L\big), 
\end{align*}
where $L=1+(n'-3)(2n'-3)$; in this way, the first triplet is about the deviation that player $i$ knows; the subsequent $|\mathcal{N}_i|$ triplets are the triplets concerning the neighbors of $i$, these are read as:``{\em $\, z_{j,t}\in [0,1]$ is the random number related to the message of $i$'s neighbor $j \in \mathcal{N}_i$}'', the choice of $z_{j,t}$ will be explained in the following; finally, the last triplets are the triplets sent to $i$ by her neighbors concerning non-neighbors of $i$.  

	
	\item {\bf Transmission of the deviation in action:} 
	\begin{itemize}
		\renewcommand{\labelitemi}{-} 
		\item If $i$ knows about a deviation of player $k \in N$ at period $t_0 \in \N$ at the beginning of the block $B_b=\{t_b,\ldots, t_b + 2n'-4\}$, at each stage $t \in B_b$ broadcasts to her neighbors the triplet $(k,t_0, x_{i,t})$, where  $x_{i,t}$ is uniformly drawn in $[0,1]$;
		
		\item if $i$ doesn't know about a deviation, but has received a message (including the empty message $m_0:=(\varnothing, \varnothing)$)  in the next stage broadcasts the empty message to her neighbors,  and the authentication key $x_{i,t}$.
		
		
	\end{itemize}

	\item {\bf Detection of deviations on communication protocol:} players can deviate from the communication protocol by broadcasting false messages about deviations in action; however, they broadcast it to a whole subset of neighbors, then can be detected. Player $i$ can detect a false message from one of her neighbors $j$ at period $t \in B_b$, in the following cases:\begin{itemize}
		\renewcommand{\labelitemi}{-} 
		\item $i$ knows about a deviation of $k$ at period $t_0$, knows that $j$ knows about it and $j$ broadcast another message; 
		\item $j$ broadcast a message on a deviation that couldn't know about; namely, at period $t \in B_b$ broadcast the triplet $(k,t_0,x_{j,t})$, with $t_0 \in B_{b_0}$ for some $b_0 < b$, and $b-b_0 <d(j,k)$.
		
		
	\end{itemize}
	Then, $i$ broadcasts the triplet $(j,t,z_{j,t})$, where $z_{j,t}= x_{j,t}$ if $i$ detected a lie of $j$ and $z_{j,t}= y$, where $y$ is randomly drawn in $[0,1]$, otherwise. The key observation is that only neighbors of $j$ know the authentication key of $j$, then those who do not know about a deviation can authenticate whether $j$ lied by cross-checking the true authentication key of $j$ with that arrived to them from the other neighbors.
	
	\item {\bf Transmission of past deviations on communication protocol:}
	if $i$ has received the triplet $(j,\tau,z_{j,\tau})$, with $j \notin \{i\} \cup \mathcal{N}_i $, then $i$ broadcasts it to $\mathcal{N}_i$ in the next period. 
	

	\item {\bf Auto-correcting past own deviations:} if $i$ has received the triplet $(j, \tau,z_{j,\tau})$ at stage $t$ but didn’t forward it at stages $t+1$, then she forwards it at stage  $t+2$. 

\item {\bf The decoding rule:}  at the end of each block, the players who don't know about a deviation analyze the messages received. The player analyzes messages as follows: 

\begin{itemize}
	\renewcommand{\labelitemi}{--} 
	\item if $i$ has received, from her neighbor $j$, during the block $n' - 1$ times a message containing the same message about a deviation, let say at stages $\tau_1,\ldots,\tau_{n'-1}$
	\item and if $i$ has not received by stage $\tau_{n'-1}$ at the latest from her other neighbors the message $(j, \tau_1, x_{j,\tau_1})$ where $x_{j,\tau_1}$ matches the authentication key that $i$ received from $j$ at stage $\tau_1$,
\end{itemize}
then, player $i$ learns the message and starts the next block as a player who knows that deviation in action. Otherwise, player $i$ does not learn the message. Moreover, once a player knows the message, she knows it at all the subsequent blocks.
\end{itemize}

\begin{remark}
	{\em It is worth remarking that, by definition, a player who does not know about a deviation does not know the time partition; however, she has received a message about a ``possible'' deviation, meaning that she knows the ``possible'' time partition generated and then when end the blocks.}
\end{remark}

Finally, another possible deviation could be a false announcement about a deviation by a neighbor. More precisely, player $j$ announces, starting then the communication protocol, that her neighbor $i$ has deviated despite $i$ followed the equilibrium path. Therefore, we have to define the strategy in this case as well.  In this situation, all the players $\mathcal{N}_i\cap \mathcal{N}_j$, neighbors of both $i$ and $j$, act as if they have detected a lie by $j$ and the players $\mathcal{N}_j \setminus\{\mathcal{N}_i\cap \mathcal{N}_j\}$ who are neighbors of $j$ and not of $i$ act as agents who do not know about a deviation. Then, all players in $\mathcal{N}_j \setminus\{\mathcal{N}_i\cap \mathcal{N}_j\}$ following the decoding rule will not learn the false announcement of $j$. 


\subsection*{Phase III: punishment phase}

For each player $i \in N$, if $i$ knows that player $k$ deviated at stage $t_0$ and knows that (due to the communication protocol) after $L$ periods every other player knows it too, starting from stage $t_0+L+1$ player $i$ enters a punishment, whose goal is to minmax player $k$. Each player $i \neq k$ plays according to his minmax strategy against $k$, denoted $(P_i(k))$. Denote by $P(k) = (P_i(k))_{i \in N_{-k}}$ the profile of minmax strategies against player $k$. For any strategy $(\sigma_k, \varphi_k)$ of player $k$:
$$
\gamma_{\delta, k} (\sigma_k, P(k), \varphi_k, (\phi_i)_{i \in N_{-k}}) \leq \sum_{t=1}^{\infty} (1-\delta) \delta^{t-1} \underline{v}_k \leq 0,
$$

where $\phi_i$ stands for any communication strategy of each minmaxing player $i \in N_{-k}$.

In particular, a player $j \in N_{-k}$ who chooses an action which is not in the support of her minmax strategy is identified as a deviator. In addition, during the punishment phase, each player $i \in N$, including player $k$, keeps sending messages as in the Phase I. The punishment phase lasts at period $t_0 + L + T(\delta)$ (the construction of $T(\delta)$ is the same as in \cite{Fudenberg_Maskin_86}).

\subsection*{Phase IV: reward phase}

After the punishment phase each player enters a reward phase, whose goal is twofold:

\begin{itemize}
	\item[--] in order to provide each minmaxing player, who is not minmaxed herself, an incentive to play her minmax strategy in Phase III, an additional bonus $\rho > 0$ is added to her average payoff. If the discount factor is large enough, the potential loss during the punishment is compensated by the future bonus;
	
	\item[--] moreover, to induce each minmaxing player to draw her pure actions according to the right distribution of her minmax strategy, we add a phase so that her payoff in the continuation game varies with her realized payoff in a way that makes her indifferent between the pure actions in the support of her minmax strategy. As in \cite{ Fudenberg_Maskin_86, Fudenberg_Maskin_91}, it is convenient to require that any feasible and strictly individually rational continuation payoff must be exactly attained. Otherwise, a minmaxing player might not be exactly indifferent between her pure actions in the support of her minmax strategy.
\end{itemize}

The possibility of providing such rewards relies on the full dimensionality of the payoff set (we assumed that $\text{int}\, V^* \neq \varnothing$). In addition, it is convenient to require that all players know the sequences of pure actions played by each minmaxing player during the punishment phase. For this reason, at the first stage of the reward phase each player $i \in N$ sends all the sequences of pure actions that her neighbors $j \in \mathcal{N}_i$ played (recall that each player observes her neighbors’ actions). The action of player $i$ at this stage is arbitrary. By replacing the first component of the first triplet of a message by the sequences of pure actions played by one’s neighbors in the communication protocol of Phase II, each player knows at the end of stage $t_0+ 2L + T (\delta)$ all the sequences actually played by all players. Finally, the construction of the reward phase as well as the specification of $\mu (\delta)$ is the same as in \cite{Fudenberg_Maskin_91}.

\medskip 

Intuitively, the reward phase ensures that minmaxing players have the right incentives to follow their prescribed strategies in Phase III. When a deviation is detected, the deviator is punished for a finite but sufficiently long period. After this punishment, those who correctly applied the punishment strategy are compensated with a higher future payoff. 

\section{Properties} \label{sec: properties}
In this section we will show two key properties of the protocol. The first states that no player can learn a message incorrectly, that is, if a player learns a message about a deviation of player $k$ at period $t_0$, $k$ at period $t_0$ has indeed deviated (Lemma \ref{imposs_fake_news}). The second states that at least one new player learns about the deviation of $k$ at the end of each block, ensuring then that all players know it after $L$ periods (Lemma \ref{at_least_one_learns}). Importantly, we will show these two properties for each cycle on the graph and then conclude that these hold for all 2-connected graphs by applying Theorem \ref{thm: whitney}. 
\medskip

Indeed, after a deviation, the information of this propagates along each cycle of the graph starting from the deviator, until each player has learnt it. To this purpose, we introduce a new notation. Given a cycle $\mathcal{C}$ on $G$, we fix an orientation\footnote{Note that this choice of notation does not imply that the cycle is oriented in the strict sense: $\mathcal{C}$ remains an undirected cycle; the orientation is introduced uniquely for notational convenience.} (e.g. the counterclockwise and clockwise direction) on the vertices of $\mathcal{C}$ and, for a vertex $i \in \mathcal{C}$, denote by $i^-$ and $i^+$ the vertices immediately before and after $i$ in $\mathcal{C}$ with respect to that orientation.
\medskip 

Specifically, we will use the letter $p$ when we refer to the counterclockwise direction and the letter $q$ when we refer to the clockwise direction. So, for $i \in \mathcal{C}$, to refer to the counterclockwise direction, we will write $i:=p$ and for the vertices immediately to the left and right $p^-$ and $p^+$ respectively. The similar reasoning applies to the clockwise direction. An example is given in Figure \ref{fig:directions_on_cycle}. 
\medskip 
\begin{figure}[h]
	\centering
	\begin{tikzpicture}
		\foreach \i in {1,...,12} {
			\node[circle, draw, fill=black, inner sep=1.5pt] (\i) at ({360/12 * (\i - 1)}:2cm) {};
		}
		
		\foreach \i in {1,...,12} {
			\pgfmathtruncatemacro{\next}{ifthenelse(\i<12, \i+1, 1)}
			\draw (\i) -- (\next);
		}
		
		\node[left] at (5) {$p^-$};
		\node[left] at (6) {$p=i$};
		\node[left] at (7) {$p^+$};
		
		\node[right] at (12) {$j=q$};
		\node[right] at (11) {$q^+$};
		\node[right] at (1) {$q^-$};
	\end{tikzpicture}
	\caption{\label{fig:directions_on_cycle}}
\end{figure}

In this way, we can say that the information propagates along the two (opposite) directions. Therefore, in Lemma \ref{imposs_fake_news} we will show that if an information arrives from a neighbor, if it is false the player would know it due to the authentication key arrived from the other direction of the cycle; meanwhile, in Lemma \ref{at_least_one_learns}, we will show that at each period the information of that deviation propagates along at least one of the two directions.

\begin{lemma} \label{imposs_fake_news}
	Suppose players follow the strategy $(\sigma^*, \varphi^*)$. If the graph $G$ is 2-connected and at most one player deviates from $(\sigma^*, \varphi^*)$ at each stage, the following properties hold: 
	\begin{itemize}
	\item[$1$.] If player $k \in N$ deviates in action at period $t_0 \in \N$, it is not possible for any player $i \in N$ to learn about another deviation in action at period $t_0$; 
	\item[$2$.] If player $k \in N$ broadcasts a false announcement about a deviation in action of some player $i$ at period $t_0$, it is not possible for any player to learn that message.
	\end{itemize}
\end{lemma}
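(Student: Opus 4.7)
The plan is to prove both parts by a strong induction on the block index $b$: at the end of every block $B_b$, no player learns a false message about an action deviation. The induction hypothesis combined with the at-most-one-deviator-per-stage assumption will reduce matters to the case where the neighbor transmitting the bad message to the would-be learner is herself a current liar; Theorem \ref{thm: whitney} then supplies a cycle through the liar and the would-be learner of length at most $n'$, along which the liar's authentication key will arrive at the would-be learner from the opposite direction and invalidate the decoding condition.

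\textbf{Key steps.} Suppose, for contradiction, that some player $\ell$ learns a false claim $\mu$ at the end of $B_b$: in Part 1, $\mu$ is of the form $(k', t_0)$ with $k' \neq k$; in Part 2, $\mu$ is $(i, t_0)$ with $i$ having played honestly. The decoding rule then yields a neighbor $j$ of $\ell$ and stages $\tau_1 < \cdots < \tau_{n'-1}$ in $B_b$ at which $\ell$ receives $\mu$ from $j$, with no cross-check $(j, \tau_1, x_{j, \tau_1})$ arriving at $\ell$ from any other neighbor by stage $\tau_{n'-1}$. Using the inductive hypothesis on earlier blocks and the uniqueness of the per-stage deviator, I argue that $j$ is actively lying at stage $\tau_1$ (no honest chain could have produced $\mu$). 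I then produce a neighbor $m$ of $j$ who detects the lie at stage $\tau_1$: for Part 2, any $m \in \mathcal{N}_j \cap \mathcal{N}_i$ serves by direct observation of $i$'s honest play, as stipulated by the protocol for false accusations; for Part 1, $m$ is either a neighbor of $k$ who is informed of the true deviation at $t_0$ (triggering detection rule (a) since $j$'s claim conflicts with the expected $(k,t_0,\cdot)$) or a neighbor whose distance to $k'$ makes $j$'s early claim impossible (detection rule (b)). At stage $\tau_1 + 1$, $m$ broadcasts $(j, \tau_1, x_{j, \tau_1})$. By Theorem \ref{thm: whitney} there is a cycle $\mathcal{C}$ on $G$ of length $n^{\mathcal{C}} \leq n'$ containing $j$ and $\ell$, and I arrange $\mathcal{C}$ so that $m$ lies on $\mathcal{C}$ adjacent to $j$ on the side opposite the path used by $\mu$. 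The transmission-of-past-deviations rule, together with the auto-correction rule, propagates $(j, \tau_1, x_{j, \tau_1})$ along $\mathcal{C}$ from $m$ to $\ell$ through a neighbor other than $j$ within $n^{\mathcal{C}} - 2 \leq n' - 2$ stages, hence by stage $\tau_1 + (n'-2) \leq \tau_{n'-1}$ (using that the $n'-1$ stages are distinct and so $\tau_{n'-1} - \tau_1 \geq n'-2$). This contradicts the decoding rule, completing both parts.

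\textbf{Main obstacle.} The delicate point is ensuring that the cycle $\mathcal{C}$ supplied by Whitney's theorem can be chosen to carry the authentication triplet from a \emph{bona fide} detector $m$ to $\ell$ without being disrupted. On one hand, the existence of a detector adjacent to $j$ on $\mathcal{C}$ requires matching the incidence of $m$ with the relevant players ($k$, $k'$, or $i$) to a suitable choice of cycle; for Part 1 this splits into a sub-case analysis using rules (a) and (b), and for Part 2 one must exploit the explicit detector set $\mathcal{N}_j \cap \mathcal{N}_i$ together with 2-connectedness. On the other hand, further liars may arise on $\mathcal{C}$ at later stages and attempt to suppress the triplet; the at-most-one-deviator-per-stage hypothesis limits each such disruption to a single-stage delay on the cycle, which the auto-correction rule absorbs, and the block length $2n'-3$ supplies the slack beyond the $n'-2$ propagation time needed to absorb this. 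Verifying that these two ingredients combine uniformly across all configurations is the bulk of the technical work.
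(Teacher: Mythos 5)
Your overall mechanism is the same as the paper's: derive a contradiction from the decoding rule, identify the transmitting neighbor as an active liar (your strong induction on blocks plays exactly the role of the paper's ``without loss of generality, $p$ is the first player to learn''), have an honest neighbor of the liar broadcast the true authentication key, and relay it around a cycle so that it reaches the would-be learner by stage $\tau_{n'-1}$. The first genuine gap is your cycle-selection step. Theorem \ref{thm: whitney} produces a cycle through two prescribed \emph{vertices} ($j$ and $\ell$) and gives no control over which edges at $j$ the cycle uses, so you cannot simply ``arrange $\mathcal{C}$ so that $m$ lies on $\mathcal{C}$ adjacent to $j$''; you flag this yourself as the main obstacle and never discharge it. What you actually need is a cycle through the two incident edges $\{m,j\}$ and $\{j,\ell\}$, and this follows not from Theorem \ref{thm: whitney} but directly from 2-connectedness: $G_{-j}$ is connected, so a simple path from $m$ to $\ell$ in $G_{-j}$, closed up through $j$, is such a cycle (of length at most $n'$ by definition of $n'$). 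The paper sidesteps the issue entirely by fixing the cycle through the true deviator $k$ and the learner $p$ and taking the detector to be the cycle-predecessor $p^{--}$ of the liar $p^-$, so that detector, liar and learner are positioned on the same cycle by construction. Relatedly, your Part-1 dichotomy for producing the detector $m$ is not exhaustive as stated: detection rule (a) requires $m$ to know that $j$ \emph{knows} $(k,t_0)$, not merely that $m$ is an informed neighbor of $k$, and rule (b) need not fire when $j$'s claim is distance-plausible.

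The second gap is your accounting for later liars on the relay path. You claim each disruption costs ``a single-stage delay'' which ``the block length $2n'-3$'' absorbs. But the deadline in the decoding rule is $\tau_{n'-1}$, not the end of the block: in the worst case the stages are consecutive, $\tau_{n'-1}=\tau_1+n'-2$, while the relay on a longest cycle needs exactly $n'-2$ stages after $\tau_1$, so there is \emph{zero} slack and even a one-stage delay breaks your bound; moreover a deviating intermediary can refuse to forward at several non-$\tau$ stages in a row, since unilateralness is per stage and the auto-correction rule can itself be violated. The argument that works---and is what the paper's ``at stage $\tau_2$ at the latest, \dots, at stage $\tau_{n'-1}$ at the latest'' schedule encodes---is that at each of the stages $\tau_1,\dots,\tau_{n'-1}$ the liar is the unique deviator, so every other player on the cycle conforms at precisely those stages, and the pending triplet must advance at least one hop per $\tau$-stage; with at most $n'-2$ hops and the $n'-2$ stages $\tau_2,\dots,\tau_{n'-1}$ available, the key reaches $\ell$ in time regardless of what happens in between. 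With these two repairs your argument coincides with the paper's proof.
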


\begin{proof} In Appendix \ref{Appendix_Lemma1}.
\end{proof}

\begin{lemma} \label{at_least_one_learns}
	Suppose players follow the strategy $(\sigma^*, \varphi^*)$ and player $k \in N$ deviates in action at period $t_0 \in \N$. If the graph $G$ is 2-connected and at most one player deviates from the communication protocol at each stage, at least one player learns that deviation every $2n'-3$ periods.
\end{lemma}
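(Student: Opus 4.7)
My plan is to reduce the claim to a propagation argument along a suitable cycle. By Theorem \ref{thm: whitney} and the 2-connectedness of $G$, for every initially uninformed player $h$ there exists a cycle $\mathcal{C}$ of length $n^c\le n'$ containing both $k$ and $h$. Fix such a cycle, orient it, and denote by $p$ and $q$ the two frontier vertices of the informed arc through $k$ at the start of block $B_b$, with $p^+$ and $q^+$ the next uninformed vertices in the two directions of $\mathcal{C}$. Initially, $p$ and $q$ are the two neighbors of $k$ on $\mathcal{C}$, informed at stage $t_0$ by direct observation. I will prove by induction on $b$ that at the end of each block at least one of $p^+$ or $q^+$ learns the deviation; since $h$ was arbitrary, this yields the lemma.

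For the inductive step I argue by contradiction: suppose neither $p^+$ nor $q^+$ learns during $B_b$. By the decoding rule, for each of these receivers every valid choice of $n'-1$ same-message receptions $\tau_1<\cdots<\tau_{n'-1}$ from the corresponding frontier broadcaster must be spoiled, either because the broadcaster failed to transmit the correct deviation message at enough stages, or because some other neighbor delivered a matching authentication triplet $(p,\tau_1,x_{p,\tau_1})$ (respectively for $q$) to the receiver no later than stage $\tau_{n'-1}$. Using the neighbor-triplet structure of the messages -- at most one authentication triplet per direct neighbor per broadcast -- together with Lemma \ref{imposs_fake_news} to constrain what a third-party liar can credibly inject, I claim that each distinct candidate first-reception time requires its own liar-stage to spoil. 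Consequently, obstructing the decoding of a single frontier receiver in $B_b$ consumes at least $n'-1$ distinct liar-stages, and the symmetric bound holds for the other.

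The last ingredient is disjointness of the two liar-stage ledgers. The two arcs of $\mathcal{C}$ between $k$ and $h$ are vertex-disjoint except at the endpoints, so any lie spoiling $p^+$'s decoding involves $p$'s authentication key and is delivered along the arc incident to $p^+$, while any lie spoiling $q^+$'s involves $q$'s key along the opposite arc; together with the one-liar-per-stage hypothesis, no single liar-stage can be charged to both accounts. Summing gives at least $2(n'-1)=2n'-2$ distinct liar-stages within a block of only $2n'-3$ stages, a contradiction. Hence at least one of $p^+$, $q^+$ learns during $B_b$, completing the induction and the lemma.

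The main obstacle I anticipate is the rigorous justification that every spoiled $\tau_1$ requires its own liar-stage and that no liar-stage can simultaneously serve both ledgers. The delicate sub-cases are attacks that exploit the forwarding rule, in which one fabricated authentication triplet is relayed across several stages by innocent intermediaries, and attacks in which a single highly-connected liar broadcasts one message containing neighbor triplets for both $p$ and $q$. Handling these cleanly requires a stage-by-stage bookkeeping of when each key $x_{p,\tau}$ first becomes available in the network, together with the vertex-disjointness of $\mathcal{C}$'s two arcs and Lemma \ref{imposs_fake_news}, to ensure that each compromised $(p,\tau)$ or $(q,\tau)$ pair is charged to exactly one liar-stage on exactly one of the two ledgers.
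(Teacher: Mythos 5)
Your architecture is the same as the paper's: fix a cycle through $k$ via Theorem \ref{thm: whitney}, suppose for contradiction that neither frontier successor $p^+$ nor $q^+$ learns during a block of $2n'-3$ stages, charge at least $n'-1$ lying stages to each side, and contradict the one-liar-per-stage budget since $2n'-2 > 2n'-3$. However, the step you explicitly defer---that each spoiled candidate first-reception time costs its own liar-stage---is precisely the heart of the paper's proof, and your proposal as written does not close it; the paper does, as follows. Let $\mathcal{L}^h \subseteq B_b$ be the lying stages of $h \in \{p, p^-, q, q^-\}$ and $m_p = (2n'-3)-|\mathcal{L}^p|$. If $m_p \ge n'-1$, then among the stages $\bar{\mathcal{L}}^p$ at which $p$ duly broadcasts $(k,t_0,x_{p,t})$ there are $m_p-(n'-1)+1 = n'-|\mathcal{L}^p|-1$ windows of $n'-1$ consecutive elements, with pairwise distinct first stages $\bar{\tau}_{\ell_1}$; by the decoding rule each window is spoiled only if $p^+$ receives the exact key $x_{p,\bar{\tau}_{\ell_1}}$, and since the key is uniform on $[0,1]$ only $p^-$ can supply it (a third party matches it with probability zero---no appeal to Lemma \ref{imposs_fake_news} is needed), while each broadcast of $p^-$ carries a single triplet slot concerning $p$; hence distinct windows consume distinct stages of $\mathcal{L}^{p^-}$, giving $|\mathcal{L}^p|+|\mathcal{L}^{p^-}| \ge n'-1$. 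If $m_p < n'-1$, then $|\mathcal{L}^p| \ge n'-1$ outright. Note also that your ledger separation via vertex-disjointness of the two arcs and routing of relays is not how the paper argues, and it is the wrong mechanism: the lie spoiling $p^+$ originates at $p^-$, on the \emph{informed} arc, and the matching key is relayed the long way around the cycle by innocent players, who are never charged. The paper's separation is simpler: $\mathcal{L}^p, \mathcal{L}^{p^-}, \mathcal{L}^q, \mathcal{L}^{q^-}$ are stage-sets of four \emph{distinct} players, hence pairwise disjoint under the one-deviation-per-stage hypothesis.

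That said, the sub-case you flag---a single liar whose one broadcast carries neighbor triplets about both $p$ and $q$---deserves the attention you give it, because it is exactly the configuration the paper's accounting silently excludes. The disjointness step needs $p, p^-, q, q^-$ to be four distinct players; yet in the first block your own initialization has $p$ and $q$ equal to the two cycle-neighbors of $k$, so that $p^-=q^-=k$. There, one stage of lying by $k$ can inject $(p,\tau,x_{p,\tau})$ and $(q,\tau',x_{q,\tau'})$ simultaneously through the separate per-neighbor slots, the two bounds $|\mathcal{L}^p|+|\mathcal{L}^{k}| \ge n'-1$ and $|\mathcal{L}^q|+|\mathcal{L}^{k}| \ge n'-1$ share the same set $\mathcal{L}^{k}$, and they no longer contradict $|\mathcal{L}^p|+|\mathcal{L}^q|+|\mathcal{L}^{k}| \le 2n'-3$. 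So your instinct locates a genuine delicacy, but you leave it (and the main charging claim) as announced goals rather than proofs: to match the paper you need the window count above, and to fully secure the lemma one would additionally need a separate treatment of the degenerate frontier $p^-=q^-$, which the paper itself does not discuss.
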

\begin{proof}
	In Appendix \ref{Appendix_Lemma2}.
\end{proof}
\medspace

Now, due to the properties described by lemmata \ref{imposs_fake_news} and \ref{at_least_one_learns}, we can state the following result.

\begin{proposition}\label{protocol_properties}
	If some player deviates in action and at most one player deviates from the communication protocol at each stage, then each player will know about that deviation in action after at most $L$ periods.
\end{proposition}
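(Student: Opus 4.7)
The plan is to derive Proposition \ref{protocol_properties} directly by combining Lemmata \ref{imposs_fake_news} and \ref{at_least_one_learns} with Whitney's characterization (Theorem \ref{thm: whitney}). First, Lemma \ref{imposs_fake_news} ensures that any message about the deviation at $t_0$ that a player eventually decodes correctly refers to the genuine action deviation of $k$, so that ``knowing the deviation'' is well defined throughout the argument and cannot be corrupted by the presence of a liar.

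Fix an arbitrary player $i\in N$. If $i=k$ or $i\in\mathcal{N}_k$, then $i$ already knows about the deviation at the end of period $t_0$ by direct observation, so there is nothing to prove. Otherwise, Theorem \ref{thm: whitney} applied to the 2-connected graph $G$ yields a cycle $\mathcal{C}_i$ of $G$ containing both $k$ and $i$. Since $n'$ is the length of the largest cycle in $G$, we have $|\mathcal{C}_i|\leq n'$. On $\mathcal{C}_i$, the two vertices adjacent to $k$ belong to $\mathcal{N}_k$ and have observed the deviation at the end of $t_0$; hence, when the first communication block starts at $t_0+1$, at most $|\mathcal{C}_i|-3\leq n'-3$ vertices of $\mathcal{C}_i$ still need to be informed.

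Next I apply Lemma \ref{at_least_one_learns} along the cycle $\mathcal{C}_i$: within each block of $2n'-3$ periods, at least one previously uninformed vertex of $\mathcal{C}_i$ learns of $k$'s deviation. Iterating across the $n'-3$ blocks $B_1,\ldots,B_{n'-3}$, which together span the interval from $t_0+1$ to $t_0+(n'-3)(2n'-3)$, by the end of $B_{n'-3}$ every vertex of $\mathcal{C}_i$ has been informed; in particular $i$ has learned the deviation. Counting the period $t_0$ of the deviation itself, this takes at most $L=1+(n'-3)(2n'-3)$ periods, as claimed. Moreover, by Lemma \ref{imposs_fake_news}, the learned message is the correct one, so $i$ actually knows the pair (deviator, date) $(k,t_0)$.

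The main conceptual step, and the only possible obstacle, is that Lemma \ref{at_least_one_learns} must be invoked as a per-cycle statement applied to the specific cycle $\mathcal{C}_i$ selected by Whitney's theorem for the target player $i$. This reading is precisely what the propagation analysis underlying that lemma (examining the two opposite orientations of a fixed cycle through $k$) is designed to establish, so no additional work is required beyond the iteration and the appeal to 2-connectedness to cover every player in $N\setminus(\{k\}\cup\mathcal{N}_k)$.
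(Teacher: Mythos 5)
Your proposal is correct and follows exactly the route the paper intends: the paper states Proposition \ref{protocol_properties} without a separate proof, as an immediate consequence of Lemma \ref{imposs_fake_news} (no false learning) and Lemma \ref{at_least_one_learns} (one new learner per block of $2n'-3$ periods, applied per cycle), combined with Theorem \ref{thm: whitney} to supply a cycle through $k$ and each target player. Your write-up merely makes explicit the intended bookkeeping --- at most $|\mathcal{C}_i|-3\leq n'-3$ uninformed vertices on the cycle, one informed per block across the $n'-3$ blocks, yielding the bound $L=1+(n'-3)(2n'-3)$ --- so there is no gap and no genuinely different method.
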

\begin{example}
Suppose that players are connected according to the graph $G$ in Figure \ref{fig:graph_example} and that player $k$ deviates. Figure \ref{fig:cycles in the graph} shows how the information about $k$'s deviation spreads simultaneously along each cycle in the graph, following the counterclockwise and clockwise directions. In the figure, the information starts from $k$ and follows the dashed arrows.

\begin{figure}[htbp] 
	\centering
	\begin{tikzpicture} 
		\coordinate (C1) at (0:1.5);
		\coordinate (C2) at (30:1.5);
		\coordinate (C3) at (60:1.5);
		\coordinate (C4) at (90:1.5);
		\coordinate (C5) at (120:1.5);
		\coordinate (C6) at (150:1.5);
		\coordinate (C7) at (180:1.5);
		\coordinate (C8) at (210:1.5);
		\coordinate (C9) at (240:1.5);
		\coordinate (C10) at (270:1.5);
		\coordinate (C11) at (300:1.5);
		\coordinate (C12) at (330:1.5);
		
		\coordinate (E1) at (0.100-0.125, 0.076-0.073);
		\coordinate (E2) at (-0.057-0.125, 0.382-0.073);
		\coordinate (E3) at (-0.505-0.125, 0.642-0.073);
		\coordinate (E4) at  (C6); 
		\coordinate (E5) at (-1.964-0.125, 0.876-0.073);
		\coordinate (E6) at (-2.754-0.125, 0.815-0.073);
		\coordinate (E7) at (-3.423-0.125, 0.642-0.073);
		\coordinate (E8) at (-3.871-0.125, 0.382-0.073);
		\coordinate (E9) at (-4.028-0.125, 0.076-0.073);
		\coordinate (E10) at (-3.871-0.125, -0.230-0.073);
		\coordinate (E11) at (-3.423-0.125, -0.490-0.073);
		\coordinate (E12) at (-2.754-0.125, -0.663-0.073);
		\coordinate (E13) at (-1.964-0.125, -0.724-0.073);
		\coordinate (E14) at  (C8); 
		\coordinate (E15) at (-0.505-0.125, -0.490-0.073);
		\coordinate (E16) at (-0.057-0.125, -0.230-0.073);	
		\foreach \i in {1,...,11} {
			\pgfmathtruncatemacro{\next}{\i+1}
			\draw (C\i) -- (C\next);
		}
		\draw (C12) -- (C1); 
		
		\foreach \i in {1,...,15} {
			\pgfmathtruncatemacro{\next}{\i+1}
			\draw (E\i) -- (E\next);
		}
		\draw (E16) -- (E1); 
		
		\foreach \i in {1,...,12} {
			\fill (C\i) circle (1.5pt);
		}
		\foreach \i in {1,...,16} {
			\fill (E\i) circle (1.5pt);
		}
		
		\node[above] at (C3) {$k$};
	\end{tikzpicture}
	\caption{\label{fig:graph_example}}
\end{figure}

\begin{figure}[htbp] 
	\centering
	\begin{subfigure}{0.323\textwidth}
		\centering
		\resizebox{\textwidth}{!}{
			\begin{tikzpicture}
				\coordinate (C1) at (0:1.5);
				\coordinate (C2) at (30:1.5);
				\coordinate (C3) at (60:1.5);
				\coordinate (C4) at (90:1.5);
				\coordinate (C5) at (120:1.5);
				\coordinate (C6) at (150:1.5);
				\coordinate (C7) at (180:1.5);
				\coordinate (C8) at (210:1.5);
				\coordinate (C9) at (240:1.5);
				\coordinate (C10) at (270:1.5);
				\coordinate (C11) at (300:1.5);
				\coordinate (C12) at (330:1.5);
				
				\coordinate (E1) at (0.100-0.125, 0.076-0.073);
				\coordinate (E2) at (-0.057-0.125, 0.382-0.073);
				\coordinate (E3) at (-0.505-0.125, 0.642-0.073);
				\coordinate (E4) at (C6);
				\coordinate (E5) at (-1.964-0.125, 0.876-0.073);
				\coordinate (E6) at (-2.754-0.125, 0.815-0.073);
				\coordinate (E7) at (-3.423-0.125, 0.642-0.073);
				\coordinate (E8) at (-3.871-0.125, 0.382-0.073);
				\coordinate (E9) at (-4.028-0.125, 0.076-0.073);
				\coordinate (E10) at (-3.871-0.125, -0.230-0.073);
				\coordinate (E11) at (-3.423-0.125, -0.490-0.073);
				\coordinate (E12) at (-2.754-0.125, -0.663-0.073);
				\coordinate (E13) at (-1.964-0.125, -0.724-0.073);
				\coordinate (E14) at (C8);
				\coordinate (E15) at (-0.505-0.125, -0.490-0.073);
				\coordinate (E16) at (-0.057-0.125, -0.230-0.073);
				
				\foreach \i in {1,...,15} {
					\pgfmathtruncatemacro{\next}{\i+1}
					\draw (E\i) -- (E\next);
				}
				\draw (E16) -- (E1); 

				
				\draw[dashed, ->] (C3) -- (C4);
				\draw[dashed, ->] (C4) -- (C5);
				\draw[dashed, ->] (C5) -- (C6);
				\draw[dashed, ->] (C6) -- (C7);
				\draw[dashed, ->] (C7) -- (C8);
				\draw[dashed, ->] (C8) -- (C9);
				\draw[dashed, ->] (C9) -- (C10);
				
				\draw[dashed, ->] (C3) -- (C2);
				\draw[dashed, ->] (C2) -- (C1);
				\draw[dashed, ->] (C1) -- (C12);
				\draw[dashed, ->] (C12) -- (C11);
				\draw[dashed, ->] (C11) -- (C10);
				
				\foreach \i in {1,...,2} {
					\fill (C\i) circle (0.5pt); 
				}
				\foreach \i in {4,...,12} {
					\fill (C\i) circle (0.5pt); 
				}
				\foreach \i in {3} {
					\fill (C\i) circle (1.5pt); 
				}
				
				\foreach \i in {1,...,3} {
					\fill (E\i) circle (1.5pt);
				}
				\foreach \i in {5,...,13} {
					\fill (E\i) circle (1.5pt);
				}
				\foreach \i in {15,...,16} {
					\fill (E\i) circle (1.5pt);
				}

				\node[above] at (C3) {$k$};
				\end{tikzpicture}
			
		}
	\end{subfigure}
	\begin{subfigure}{0.323\textwidth}
		\centering
		\resizebox{\textwidth}{!}{
			\begin{tikzpicture}
				\coordinate (C1) at (0:1.5);
				\coordinate (C2) at (30:1.5);
				\coordinate (C3) at (60:1.5);
				\coordinate (C4) at (90:1.5);
				\coordinate (C5) at (120:1.5);
				\coordinate (C6) at (150:1.5);
				\coordinate (C7) at (180:1.5);
				\coordinate (C8) at (210:1.5);
				\coordinate (C9) at (240:1.5);
				\coordinate (C10) at (270:1.5);
				\coordinate (C11) at (300:1.5);
				\coordinate (C12) at (330:1.5);
				
				\coordinate (E1) at (0.100-0.125, 0.076-0.073);
				\coordinate (E2) at (-0.057-0.125, 0.382-0.073);
				\coordinate (E3) at (-0.505-0.125, 0.642-0.073);
				\coordinate (E4) at (C6);
				\coordinate (E5) at (-1.964-0.125, 0.876-0.073);
				\coordinate (E6) at (-2.754-0.125, 0.815-0.073);
				\coordinate (E7) at (-3.423-0.125, 0.642-0.073);
				\coordinate (E8) at (-3.871-0.125, 0.382-0.073);
				\coordinate (E9) at (-4.028-0.125, 0.076-0.073);
				\coordinate (E10) at (-3.871-0.125, -0.230-0.073);
				\coordinate (E11) at (-3.423-0.125, -0.490-0.073);
				\coordinate (E12) at (-2.754-0.125, -0.663-0.073);
				\coordinate (E13) at (-1.964-0.125, -0.724-0.073);
				\coordinate (E14) at (C8);
				\coordinate (E15) at (-0.505-0.125, -0.490-0.073);
				\coordinate (E16) at (-0.057-0.125, -0.230-0.073);
				
				\foreach \i in {1,...,3} {
					\pgfmathtruncatemacro{\next}{\i+1}
					\draw (E\i) -- (E\next);
				}
				\foreach \i in {14,...,15} {
					\pgfmathtruncatemacro{\next}{\i+1}
					\draw (E\i) -- (E\next);
				}
				\draw (E16) -- (E1); 
				
				\draw (C6) -- (C7);
				\draw (C7) -- (C8);
				
				\draw[dashed, ->] (C3) -- (C4);
				\draw[dashed, ->] (C4) -- (C5);
				\draw[dashed, ->] (C5) -- (C6);
				
				\draw[dashed, ->] (E4) -- (E5);
				\draw[dashed, ->] (E5) -- (E6);
				\draw[dashed, ->] (E6) -- (E7);
				\draw[dashed, ->] (E7) -- (E8);
				\draw[dashed, ->] (E8) -- (E9);
				\draw[dashed, ->] (E9) -- (E10);
				\draw[dashed, ->] (E10) -- (E11);
				\draw[dashed, ->] (E11) -- (E12);
				\draw[dashed, ->] (E12) -- (E13);
				\draw[dashed, <-] (E13) -- (E14);

				\draw[dashed, ->] (C3) -- (C2);
				\draw[dashed, ->] (C2) -- (C1);
				\draw[dashed, ->] (C1) -- (C12);
				\draw[dashed, ->] (C12) -- (C11);
				\draw[dashed, ->] (C11) -- (C10);
				\draw[dashed, ->] (C10) -- (C9);
				\draw[dashed, ->] (C9) -- (C8);

				\foreach \i in {1,...,3} {
					\fill (C\i) circle (0.5pt); 
				}
				\foreach \i in {4,...,12} {
					\fill (C\i) circle (0.5pt); 
				}
				\foreach \i in {3, 7} {
					\fill (C\i) circle (1.5pt); 
				}
				
				\foreach \i in {1,...,3} {
					\fill (E\i) circle (1.5pt);
				}
				\foreach \i in {5,...,13} {
					\fill (E\i) circle (0.5pt);
				}
				\foreach \i in {15,...,16} {
					\fill (E\i) circle (1.5pt);
				}

				\node[above] at (C3) {$k$};
			\end{tikzpicture}
		}
	\end{subfigure}
	\begin{subfigure}{0.323\textwidth}
		\centering
		\resizebox{\textwidth}{!}{
			\begin{tikzpicture}
				
				\coordinate (C1) at (0:1.5);
				\coordinate (C2) at (30:1.5);
				\coordinate (C3) at (60:1.5);
				\coordinate (C4) at (90:1.5);
				\coordinate (C5) at (120:1.5);
				\coordinate (C6) at (150:1.5);
				\coordinate (C7) at (180:1.5);
				\coordinate (C8) at (210:1.5);
				\coordinate (C9) at (240:1.5);
				\coordinate (C10) at (270:1.5);
				\coordinate (C11) at (300:1.5);
				\coordinate (C12) at (330:1.5);
				
				\coordinate (E1) at (0.100-0.125, 0.076-0.073);
				\coordinate (E2) at (-0.057-0.125, 0.382-0.073);
				\coordinate (E3) at (-0.505-0.125, 0.642-0.073);
				\coordinate (E4) at (C6);
				\coordinate (E5) at (-1.964-0.125, 0.876-0.073);
				\coordinate (E6) at (-2.754-0.125, 0.815-0.073);
				\coordinate (E7) at (-3.423-0.125, 0.642-0.073);
				\coordinate (E8) at (-3.871-0.125, 0.382-0.073);
				\coordinate (E9) at (-4.028-0.125, 0.076-0.073);
				\coordinate (E10) at (-3.871-0.125, -0.230-0.073);
				\coordinate (E11) at (-3.423-0.125, -0.490-0.073);
				\coordinate (E12) at (-2.754-0.125, -0.663-0.073);
				\coordinate (E13) at (-1.964-0.125, -0.724-0.073);
				\coordinate (E14) at (C8);
				\coordinate (E15) at (-0.505-0.125, -0.490-0.073);
				\coordinate (E16) at (-0.057-0.125, -0.230-0.073);
				
				\foreach \i in {4,...,13} {
					\pgfmathtruncatemacro{\next}{\i+1}
					\draw (E\i) -- (E\next);
				}
								
				\draw (C6) -- (C7);
				\draw (C7) -- (C8);
				
				\draw[dashed, ->] (C3) -- (C4);
				\draw[dashed, ->] (C4) -- (C5);
				\draw[dashed, ->] (C5) -- (C6);
				
				\draw[dashed, ->] (E4) -- (E3);
				\draw[dashed, ->] (E3) -- (E2);
				\draw[dashed, ->] (E2) -- (E1);
				\draw[dashed, ->] (E1) -- (E16);
				\draw[dashed, ->] (E15) -- (E16);
				\draw[dashed, ->] (E14) -- (E15);

				\draw[dashed, ->] (C3) -- (C2);
				\draw[dashed, ->] (C2) -- (C1);
				\draw[dashed, ->] (C1) -- (C12);
				\draw[dashed, ->] (C12) -- (C11);
				\draw[dashed, ->] (C11) -- (C10);
				\draw[dashed, ->] (C10) -- (C9);
				\draw[dashed, ->] (C9) -- (C8);

				\foreach \i in {1,...,3} {
					\fill (C\i) circle (0.5pt); 
				}
				\foreach \i in {4,...,12} {
					\fill (C\i) circle (0.5pt); 
				}
				\foreach \i in {3, 7} {
					\fill (C\i) circle (1.5pt); 
				}
				
				\foreach \i in {1,...,3} {
					\fill (E\i) circle (0.5pt);
				}
				\foreach \i in {5,...,13} {
					\fill (E\i) circle (1.5pt);
				}
				\foreach \i in {15,...,16} {
					\fill (E\i) circle (0.5pt);
				}

				\node[above] at (C3) {$k$};
			\end{tikzpicture}
		}
	\end{subfigure}
	\caption{\label{fig:cycles in the graph}}
\end{figure}


\end{example}

\section{Proof of the main result} \label{sec:proof_main_result}
In this section, we give the proof of Theorem \ref{main_result}. In particular, we show that for each belief $\mu$, the assessment $((\sigma^*, \varphi^*), \mu)$ is a PBE. Formally, we denote by $H_i^t(UD |(\sigma^*, \phi^*))$ the set of private histories for player $i$ such that: either no deviation (from $ (\sigma^*, \phi^*)$) has occurred, or only unilateral deviations have taken place. That is to say, for any history in $H^t_i(UD |(\sigma^*, \phi^*))$, no multilateral deviation has occurred. Similarly, denote by $H^t(UD |(\sigma^*, \phi^*))$ the set of total histories along which only unilateral deviations, if any, have taken place. A {\it belief assessment} is a sequence $\mu = (\mu^t_i)_{i \in N, t \geq 1}$ with $\mu^t_i : H^t_i \to \Delta(H^t)$:  
given a private history $h^t_i$ of player $i$, $\mu^t_i (h_i)$ is the probability distribution representing the belief that player $i$ holds on the full history. An {\it assessment} is an element $((\sigma, \phi), \mu)$ where $(\sigma, \phi)$ is a strategy profile and $\mu$ a belief assessment. We consider a restricted set of beliefs $\mathcal{B}=(\mathcal{B}_i)_{i \in N}$, which is strictly included in the set of total histories $H^t$. Namely, for each player $i\in N$, every belief in $\mathcal{B}_i$ only assigns positive probability to histories that differ from equilibrium play, $(\sigma^*, \phi^*)$, in as much as, and to the minimal extent which, their private history dictates that it does. Formally, for every player $i \in N$ and every history $h^t_i \in H^t_i$, we denote by $H^t[h^t_i] \subset H^t$ the set of total histories for which the projection on $H^t_i$ is $h^t_i$. A total history $h^t$ in $H^t[h^t_i]$ is said to be \textit{compatible} with private history $ h^t_i$ of player $i$. Now, for every player $i \in N$ and every history $ h^t_i \in H^t_i$, let $H^t[h^t_i](UD |(\sigma^*, \phi^*)) \subseteq H^t[h^t_i]$ be the set containing all the total histories that are compatible with $h^t_i$ and included in $H^t(UD |(\sigma^*, \phi^*))$. Then, for each player $i \in N$ and every history $h^t_i \in H^t_i$:  
$$H^t[h^t_i](UD |(\sigma^*, \phi^*)) = H^t[h^t_i] \cap H^t(UD |(\sigma^*, \phi^*)).$$ The set of beliefs $\mathcal{B}_i$ is then the following:  
$$\mathcal{B}_i = \{ (\mu^t_i)_{t \geq 1} : \forall t \geq 1, \forall h^t_i \in H^t_i, h^t_i \in H^t_i(UD |(\sigma^*, \phi^*)) \Rightarrow \text{supp}(\mu^t_i(h^t_i)) \subseteq H^t[h^t_i](UD |(\sigma^*, \phi^*)) \},$$ 

where $\text{supp}$ stands for the support of $\mu^t_i(h^t_i)$. In other words, the beliefs of any arbitrary player $i \in N$ are such that: if $i$ observes a history compatible with either no deviation or unilateral deviations, then she assigns probability one to the fact that the total history is in $H^t(UD |(\sigma^*, \phi^*))$ and is compatible with $h^t_i$; otherwise, she assigns probability one to the fact that the total history is not in $H^t(UD |(\sigma^*, \phi^*))$; in the latter case, we prescribe the strategy as the player taking the action that maximizes her payoff in the one-shot game and sends the empty message at each period.

\begin{proof}[Proof of Theorem \ref{main_result}]
Consider the strategy $(\sigma^*, \varphi^*)$ constructed above for all private histories in $H^t_i(UD|(\sigma^*, \varphi^*))$, for each player $i \in N$. Take a belief assessment $\mu \in \mathcal{B}$. We now prove that $((\sigma^*, \varphi^*), \mu)$ is a PBE. Suppose that player $k$ deviates or lies at period $t_0 \in \N$. Due to Lemma \ref{imposs_fake_news}, lies do not change continuation strategies, then we focus only on player $k$'s deviations in action. Due to Proposition \ref{protocol_properties}, after $L$ periods all players know about that deviation, then $k$ is minmaxed starting from period $t_0 + L + 1$. Now, take $v'=(v'_1,\ldots,v'_n)$ in the interior of $V^*$ such that $v_i > v'_i$ for all $i$. Since $v'$ is in the interior of $V^*$, there exists $\rho>0$ so that
$$
(v'_1 + \rho, \ldots, v'_{k-1} + \rho, v'_k, v'_{k+1} + \rho, \ldots, v'_n + \rho)
$$
is in $V^*$. 
Let $\pi^k = (\pi^k_1, \dots, \pi^k_n)$ be a joint strategy that realizes these payoffs. Let $m^k = (m^k_1, \dots, m^k_n)$ be an $n$-tuple of strategies such that the strategies for players other than $k$ together minimize player $k$'s maximum payoff, and such that $u_k(m^k) = 0$. Let $w^k_i = u_i(m^k)$ be player $i$'s per-period payoff when minimaxing player $k$. Namely, $m^k$ is the strategy during Phase III, $\pi^k$ the strategy during Phase IV, meanwhile $u(m^k)$ and $u(\pi^k)$ are the related payoffs. For each $i\in N$, choose an integer $T(\delta)_i$ such that
	\begin{equation} \label{eq8_Fudenberg_Maskin_86}
	\frac{\bar{v}_i}{v'_i} < 1 + T(\delta)_i,
\end{equation}
where $\bar{v}_i$ is player $i$'s greatest one-shot payoff. If player $k$ deviates in Phase I, receives at most $\bar{v}_k$ the period she deviates, zero after $L$ stages for $T(\delta)_k$ stages, and $v'_k$ each period thereafter. Her total payoff, therefore, is not greater than  
\begin{equation} \label{eq9_Fudenberg_Maskin_86}
	\bar{v}_k + \delta \frac{1 - \delta^L}{1 - \delta}v_k +  \frac{\delta^{L+ T(\delta)_k +1}}{1 - \delta} v'_k.
\end{equation}
If she follows the equilibrium path, she gets $v_k/(1-\delta)$, so that the gain deviating is less than 
\begin{equation} \label{eq10_Fudenberg_Maskin_86}
	\bar{v}_k - v_k - \delta^{L+1}\frac{1 - \delta^{T(\delta)_k}}{1 - \delta} v'_k.
\end{equation}
Because $\frac{1 - \delta^{T(\delta)_k}}{1 - \delta}$ converges to $ T(\delta)_k$ as $\delta$ tends to $1$, condition \eqref{eq8_Fudenberg_Maskin_86} ensures that \eqref{eq10_Fudenberg_Maskin_86} is negative for all $\delta$ larger than some $\bar{\delta} < 1$. Suppose player $k$ deviates in Phase III when she is being punished, she obtains at most zero the period in which she deviates and then only lengthens her punishment, postponing the positive payoff $v'_k$. Suppose player $k$ deviates in Phase III when another player $k'$ is being punished. Namely, suppose deviates at $s\text{-th}$ period of punishment phase against player $k'$. We have to distinguish the cases in which $\min (T(\delta)_{k'}-s+1,L)=T(\delta)_{k'}-s+1$ and $\min (T(\delta)_{k'}-s+1,L)=L$. Suppose first $\min (T(\delta)_{k'}-s+1,L)=T(\delta)_{k'}-s+1$. Then she receives 
$$
\bar{v}_k + \delta \frac{1-\delta^{T(\delta)_{k'}-s}}{1 - \delta} w^{k'}_{k} + \delta^{T(\delta)_{k'} - s +1} \frac{1 - \delta^{L - T(\delta)_{k'} + s}}{1 - \delta}(v'_k + \rho) + \frac{\delta^{L + T(\delta)_k + 1}}{1-\delta} v'_k;
$$
meanwhile, if she does not deviate receives
\begin{equation*}
\frac{1-\delta^{T(\delta)_{k'}-s + 1}}{1 - \delta} w^{k'}_{k} + \delta^{T(\delta)_{k'}-s + 1} \frac{1}{1 - \delta} (v'_k + \rho).
\end{equation*}
Thus, the gain to deviating is at most
\begin{equation} \label{eq11_Fudenberg_Maskin_86}
\bar{v}_k - w^{k'}_{k} - \delta^{L+1}\frac{1 - \delta^{T(\delta)_k}}{1 - \delta} v'_k - \frac{\delta^{L+1}}{1 - \delta} \rho.
\end{equation}
As $\delta \to 1$, the third term in \eqref{eq11_Fudenberg_Maskin_86} remains finite beacuse $(1 - \delta^{T(\delta)_k})/(1-\delta)$ converges to $T(\delta)_k$. But, because $\delta^{L+1}$ converges to 1, the last converges to negative infinity. A very similar argument can be used for the case in which $\min (T(\delta)_{k'}-s+1,L)=L$. Thus, there exists $\bar{\delta}_k < 1$ such that for all $\delta > \bar{\delta}_k$, player $k$ will not deviate in Phase III if the discount factor is $\delta$. Finally, the argument for why players do not deviate in Phase IV is basically the same as that for Phase I.  
\medskip 

To conclude, notice that the proof of the optimality of $(\sigma^*, \varphi^*)$ above does not take into account the beliefs. Indeed, since 
$\mu_k \in \mathcal{B}_k$, for any history in $H^t_k (UD | (\sigma^*, \varphi^*))$, each player $k$ believes that there is either no deviation or only unilateral deviations. The partial strategy $(\sigma^*, \varphi^*)$ prevents player $k$ from deviating, no matter what her beliefs are. Indeed, 
under $\mu_k$, player $k$ believes that if she deviates, it will lead either to her punishment, or to no changes in her continuation payoff (in case of a lie).

\end{proof}

\section*{Concluding remarks and open problems} \label{sec: concluding remarks}




In this paper we consider only unilateral deviations, but what would happen if we consider $k$ deviations at each stage? We conjecture that our main ideas could be extended to this situation. The communication protocol should work with some modifications, but the connectivity of the graph must be increased: we would need a $2k$-connected graph, meaning that for every pair of vertices, there exist $k$ disjoint cycles containing both. However, this would lead to a complete graph for $k = n$. Then one could ask whether there is a more efficient protocol that can handle multiple deviations without too restrictive connection constraints on the graph. Another interesting question is to adapt these models to deal with coalitions of deviations. There are many questions that appear: what is a good definition of equilibrium to consider coalitions of deviators? Do deviators share their information in a coalition? A first definition for this kind of equilibria has been given in Rosenthal in \cite{Rosenthal_72} and further developed e.g. by Konishi and Ray in \cite{Konishi_Ray_03} and Vartiainen in \cite{Vartiainen_11}; one of the most appropriate might be the one developed by Laraki in \cite{Laraki_09}; here, a set of {\it permissible coalitions} is a subset $\mathcal{C} \subseteq 2^N$ and a profile is a {\it coalitional-equilibrium} if no coalition in $\mathcal{C}$ has a unilateral deviation that profits all its members. Nash-equilibria (\cite{Nash_50}) correspond to $\mathcal{C} = \{\{i\}, i\in N\}$, Aumann-equilibria (\cite{Aumann_60}) (usually called {\it strong-equilibria}) to $\mathcal{C}= 2^N$. For our model, we could consider $\mathcal{C}$ as the set of all neighbors for each player, i.e. $\mathcal{C}=\{\mathcal{N}_i\, |\, i \in N\}$. However, this problem calls for careful study and is left for future research. 


\bibliographystyle{plain} 
\bibliography{biblio.bib}

\appendix 
\section{Proof of Lemma \ref{imposs_fake_news}} \label{Appendix_Lemma1}
\begin{proof}
	
   Suppose that player $k \in N$ deviates at stage $t_0 \in \N$ and let $\mathcal{C}$ be a cycle of length $n^\mathcal{C}$ on $G$ containing $k$; we will show that for any player in $\mathcal{C}$ it is not possible to learn a message regarding a deviation at stage $t_0$ where the deviator is different from $k$. 
\medskip 

By contradiction, assume that player $i:=p \in \mathcal{C}$ learns about another deviation, i.e. learns about a deviation in action $(k',t_0)$ 
not equal to $(k,t_0)$ at the end of some block $B_b= \{t_b, \ldots, t_b+ 2n'-4\}$, with $b \geq 1$. Without loss of generality, assume that $p$ is the first player to learn it. By definition, $p$ must have $(i)$ received from $p^-$ or $p^+$, suppose $p^-$, the message $(k',t_0)$ in a sequence $t_b= \tau_1 < \tau_2 < \cdots < \tau_{n'-1} \le t_b + 2n'-4$ of $n'-1$ stages and  must have $(ii)$ not received from $p^+$ a triplet with the authentication key of $p^-$ at stage $\tau_1$. Since we are assuming that $p$ is the first player to learn such (false) deviation, it must be that $p^-$ is lying at all stages $\tau_1, \dots, \tau_{n'-1}$. However, under the hypothesis of unilateral deviation, it must be that all the other players in ${\cal C} \setminus \{p^-\}$ are not lying at all $\tau_1, \dots, \tau_{n'-1}$. Therefore, player $p^{--}$ at stage $\tau_2$ at the latest broadcasts to $p^-$ and $p^{---}$ the triplet $(p^-, \tau_1, x_{p^-, \tau_1})$ (that is the exact authentication key of $p^-$ at stage $\tau_1$), player $p^{---}$ broadcasts it at stage $\tau_3$ at the latest, and so on. Since in ${\cal C}$ there are $n^\mathcal{C}-2\le n'-2$  players other than $p$ and $p^-$, player $p^+$ broadcasts the triplet $(p^-, \tau_1, x_{p^-, \tau_1})$ to $p$ and $p^{++}$ at the latest at a stage lower than $\tau_{n'-1}$. Thus, player $p$ matches the authentication key arrived from $p^+$ with the one of $p^-$ at stage $\tau_1$ before the stage $t_b + 2n' -4 $ and does not learn such (false) deviation, contradiction. 
\medskip 

Now, under our assumptions, the graph $G$ is 2-connected, so for any other player $i \in N$ there exists a cycle containing both $k$ and $i$ by applying Theorem \ref{thm: whitney}; thus, doing the same reasoning for every cycle containing both we have the statement. The second part of the statement follows directly from the first.

\end{proof}

\section{Proof of Lemma \ref{at_least_one_learns}} \label{Appendix_Lemma2}

\begin{proof}
	If every player follows the strategy, the information about the deviation $(k,t_0)$ spreads throughout each cycle $\mathcal{C}$ on $G$ along the counterclockwise and clockwise directions starting from $k$. We will show that if at most one player lies at each period, at least one new player learns about the deviation of $k$ in each block, thus demonstrating that if the information does not spread in one direction, it spreads in the other almost surely.
	Consider two players $i, j \in \mathcal{C}$ who know the message of the deviation $(k,t_0)$ and refer to them as $p$ and $q$ respectively. The aim is to prove that either $p^+$ or $q^+$ learns the message at the end of the block $B_b = \{t_b, \ldots, t_b + 2n' - 4\}$. 
	\medskip 
	
		The proof is by contradiction. Assume neither $p^+$ nor $q^+$ learns the message at the end of the block. For any $h \in \{p, p^-, q, q^-\}$ denote by ${\cal L}^h$ the stages of the block where player $h$ is lying. Under the hypothesis of unilateral deviation, the sets ${\cal L}^p$, ${\cal L}^{p^-}$, ${\cal L}^q$ and ${\cal L}^{q^-}$ are pairwise disjoint and so
	\begin{equation} \label{(1) Lemma2_Marie}
		|{\cal L}^p| + |{\cal L}^{p^-}| + |{\cal L}^q| + |{\cal L}^{q^-}| \leq |{\cal L}^p \cup {\cal L}^{p^-} \cup {\cal L}^q \cup {\cal L}^{q^-}| \leq 2n' - 3. 
	\end{equation}
	
	For any subset ${\cal L} \subseteq B_b$, denote by $\bar{{\cal L}}$ its complement $B_b\setminus {\cal L}$. By definition, player $h$ lies at all stages in ${\cal L}^h$ and follows the communication protocol at all stages in $\bar{{\cal L}}^h$.  Let $\bar{{\cal L}}^p := \{\bar{\tau}_1, \ldots, \bar{\tau}_{\alpha}, \ldots, \bar{\tau}_{m_p}\}$, with $\bar{\tau}_{\alpha} < \bar{\tau}_{\alpha+1}$ for all $\alpha$. Notice that $m_p = (2n' - 3) - |{\cal L}^p|$. 
	
	\noindent We have to distinguish two cases: $(i)$ $m_p \ge n'-1$ and $(ii)$ $m_p < n'-1$. Suppose $(i)$. In this case, player $p^+$ has received by $p$ more than $n'-1$ messages about the deviation of $k$; then, by the decoding rule, in order for $p^+$ not to learn the message it must be that $p^+$ has recived from $p^{++}$ the authentication key of $p$ of the first stage of any possible subsequence of $n'-1$ stages of $\bar{{\cal L}}^p$. However, only player $p^-$ and $p^+$ know the correct authentication key of $p$ and for the other player the probability of guessing it is zero; thus, player $p^-$ must have lied at all such periods. Formally, consider all sequences in $\bar{{\cal L}}^p$ of consecutive stages\footnote{In such sequences if $\bar{\tau}_{\ell}$ and $\bar{\tau}_{\ell'}$ are elements of the sequence, so are all $\bar{\tau}_{\ell''}$ satisfying $\bar{\tau}_{\ell} < \bar{\tau}_{\ell''} < \bar{\tau}_{\ell'}$.}  $(\bar{\tau}_{\ell_1}, \ldots, \bar{\tau}_{\ell_{n' - 1}})$ with different starting points.  By construction, there are $(m_p + 1) - (n' - 1) = n' - |{\cal L}^p| - 1$ such sequences. Fix the sequence $(\bar{\tau}_{\ell_1}, \ldots, \bar{\tau}_{\ell_{n' - 1}})$. Player $p^+$ does not learn the deviation $(k,t_0)$ only if player $p^-$ lies by broadcasting the correct authentication key $x_{p, \bar{\tau}_{\ell_1}}$ at some stage $t > \bar{\tau}_{\ell_1}$; therefore, $t \in {\cal L}^{p^-}$. Since there are $n' - |{\cal L}^p| - 1$ such sequences, player $p^-$ must have lied at least $n' - |{\cal L}^p| - 1$ times; that is,
	$$
	|{\cal L}^{p^-}| \geq n' - |{\cal L}^p| - 1.
	$$
	Equivalently, we have
	\begin{equation}\label{(3) Lemma2_Marie}
		|{\cal L}^p| + |{\cal L}^{p^-}| \geq n' - 1.
	\end{equation}

	\bigskip \bigskip

	Suppose now $(ii)$. In this case, we have 
	$$|{\cal L}^p| = (2n' - 3) - m_p > 2n' - 3 - (n' - 1) = n' - 2,$$
	and so $$|{\cal L}^p| \geq n' - 1.$$
	Thus, inequality \eqref{(3) Lemma2_Marie} holds also for case $(ii)$. Applying the same reasoning for players $q$ and $q^-$, we get
	\begin{equation} \label{(4) Lemma2_Marie}
		|{\cal L}^q| + |{\cal L}^{q^-}| \geq n' - 1,
	\end{equation}
	Summing equations \eqref{(3) Lemma2_Marie} and \eqref{(4) Lemma2_Marie}, we obtain that
	$$
	|{\cal L}^p| + |{\cal L}^{p^-}| + |{\cal L}^q| + |{\cal L}^{q^-}| \geq 2n' - 2.
	$$
	Contradiction with Equation \eqref{(1) Lemma2_Marie}. As in Lemma \ref{imposs_fake_news}, under the hypothesis of 2-connectedness of $G$, for every two players $k$ and $i$ there exists a cycle on $G$ that contains both. Then, applying the same reasoning to each cycle of $G$ we obtain the statement.

\end{proof}

\end{document}